\documentclass[runningheads]{llncs}
\usepackage[T1]{fontenc}
\DeclareFontShape{T1}{cmr}{m}{scit}{ <-> ssub * cmr/m/sc }{}
\usepackage{graphicx}
\usepackage{amsmath}
\usepackage{amssymb}
\usepackage{booktabs}
\usepackage{enumitem}
\usepackage{tikz}
\usepackage{tabularx}
\usetikzlibrary{arrows.meta,positioning,shapes.geometric}

\title{Probabilistic Model Checking of Autoregressive Neural Sequence Models}
\titlerunning{Probabilistic Model Checking of Autoregressive Neural Models}

\author{Helge Spieker\orcidID{0000-0003-2494-4279} \and
Dennis Gross\orcidID{0009-0001-5734-0538} \and
Arnaud Gotlieb\orcidID{0000-0002-8980-7585}}

\authorrunning{H. Spieker et al.}

\institute{Simula Research Laboratory, Oslo, Norway\\
Corresponding author: \email{helge@simula.no}}

\begin{document}

\maketitle

\begin{abstract}
Test-set accuracy is silent on two issues that matter when deploying autoregressive neural sequence models: how much probability mass the system under test (SUT) places on constraint-violating alternatives that are reachable under sampling and what fraction of the input population satisfies a domain requirement. 
We answer both with probabilistic model checking. The pipeline extracts a discrete-time Markov chain (DTMC) from the SUT's token-by-token generation, verifies formal PCTL specifications with the PRISM model checker, and aggregates the per-input verdicts into a coverage curve over the input space. A soundness theorem establishes the DTMC as an under-approximation, so every verdict yields a certified interval on the SUT's true reachability probability. The coverage built from those verdicts is, therefore, conservative by construction. A counterexample-guided abstraction refinement (CEGAR) loop adaptively tightens the interval, and a maximum-likelihood algorithm extracts the most probable falsifying trace. 
Two case studies exercise the pipeline.
On a GPT-2 computer-aided process-planning (CAPP) model with 100\% test accuracy, the pipeline quantifies the probability mass greedy decoding hides, but that is reachable with sampling; and identifies the smallest training fraction at which an ordering requirement holds population-wide, neither of which test accuracy can report.
We then verify the SMILES molecular generator with a $50\times$ larger vocabulary. The only change is an external chemical-validity oracle, and the pipeline identifies the gap between structural completeness and chemical validity.

\keywords{probabilistic model checking \and DTMC abstraction \and PCTL
  \and autoregressive
  transformers \and neural network verification}
\end{abstract}

\section{Introduction}\label{sec:intro}

Autoregressive transformer models, the technology behind large language models, now appear in engineering pipelines, where generated sequences affect physical artifacts, regulatory outcomes, and downstream automation. 
However, their reliability is far from guaranteed~\cite{valmeekam2023planning,ji2023hallucination}.
Yet their evaluation still rests almost entirely on test-set accuracy: a single number reporting how often the output matches the held-out reference~\cite{liang2023holistic}.
Three questions matter for deployment that no test set can answer on its own.
First, how much probability mass does the model place on low-confidence alternatives that greedy decoding never selects, but a sampled decoder eventually will~\cite{holtzman2020curious}?
Second, do generated sequences obey domain constraints that lie outside the training signal and so never enter an accuracy metric, because it does not report failure causes or because these failures are not part of the ground-truth dataset in the first place?
Third, what fraction of the entire input population satisfies a given criterion?
Are domain violations corner cases or potentially frequent occurrences?

As an example, consider the problem of computer-aided process planning (CAPP).
High-level CAPP is the task of determining appropriate sequences of manufacturing processes (e.g., milling, drilling, grinding) to produce a given part. 
A part can usually be made in several ways, so the task is naturally a variable-length sequence-prediction problem.
Stathatos et al.~\cite{STATHATOS2026103233} address this problem with a domain-specific GPT-2 transformer~\cite{radford2019language} that maps a discrete part description to an ordered sequence of operations.
The model reports 100\% sequence accuracy on the held-out test set, a number that suggests it is ready to deploy. 
Our pipeline shows otherwise: even the smallest checkpoint with 100\% accuracy (15\% of the data) still routes over 9\% of its probability mass to completions greedy decoding never selects, and on data-starved checkpoints the most probable hidden trace violates manufacturing phase ordering.
Neither failure mode is visible to accuracy, because both reference probability mass or a domain constraint the ground-truth sequence never exercises.
This hidden mass is a risk, but it is also a lever. 

A deployed model must commit to a decoding rule: \emph{greedy} decoding always emits the single most likely next operation, whereas a \emph{sampled} decoder draws from the model's distribution and so can produce different plans on repeated runs. 
This can potentially improve performance, because the best next token might not always have the highest probability; especially in settings with limited data.
Because our pipeline reasons about the whole distribution rather than one trajectory, it predicts (without re-running the model) how often a sampled decoder, paired with the verifier as a filter, recovers a correct plan greedy decoding would miss~\cite{cobbe2021verifiers}.
On the most data-starved checkpoint a slightly sharpened sampler does exactly that, turning the risk into a usable safety margin.

In our approach, we treat the autoregressive transformer as the \emph{system under test} (SUT) and apply a three-stage probabilistic-model-checking pipeline.
It extracts a \emph{discrete-time Markov chain (DTMC)} by unrolling the token-by-token generation, verifies Probabilistic Computation Tree Logic (PCTL) specifications against it with the PRISM model checker~\cite{Kwiatkowska11}, and aggregates the per-input verdicts into a \emph{coverage curve} over the input population.
Because the DTMC is a sound under-approximation, each input's set of PRISM verdicts yields a certified \emph{interval} on the SUT's reachability probability, not a point estimate, and the population fractions built from these intervals are themselves conservative.
A CEGAR-style refinement loop~\cite{Clarke00cegar} tightens the interval adaptively, and a maximum-likelihood counterexample algorithm extracts the most probable falsifying trace from any violated specification.
The result is a \emph{population coverage statement}: the fraction of inputs for which property $\phi$ holds with probability at least $\theta$.

Overall, this paper makes the following contributions:
\begin{enumerate}
  \item A \emph{probabilistic model verification methodology} for autoregressive transformers built on three artifacts: a DTMC abstraction of the SUT (Sec.~\ref{sec:extraction}), PCTL specifications including domain-specific properties (Sec.~\ref{sec:ordering}), and population coverage over the input space (Sec.~\ref{sec:pac-method}).
  \item A \emph{soundness theorem} (Theorem~\ref{thm:soundness}) that yields a certified interval on the SUT, tightened adaptively by a \emph{CEGAR-style refinement loop} (Sec.~\ref{sec:cegar}).
  \item A \emph{maximum likelihood counterexample algorithm} (Sec.~\ref{sec:witness-method}) extracting the most probable falsifying trace as an inspectable artifact. 
  \item Experiments on two systems: \textit{CAPP} (Sec.~\ref{sec:capp-eval}) exposing failure modes invisible to accuracy, and \textit{SMILES} molecule generation (Sec.~\ref{sec:smiles}) with a substantially larger vocabulary.
\end{enumerate}

\section{Related Work}\label{sec:related}

\emph{Model checking and abstraction of neural systems.}
A close line of work applies probabilistic model checking to neural controllers:
MOSAIC \cite{bacci2020probabilistic} abstracts the closed-loop system as an MDP (or interval MDP) and verifies PCTL properties with PRISM.
Similarly, Gross et al.\ extract DTMCs from RL policies~\cite{DBLP:conf/sac/GrossS25,DBLP:conf/esann/GrossS24} or bounded LLM outputs~\cite{DBLP:conf/ictai/GrossSG25} for safety verification.
Our approach differs in its target: the joint distribution of an autoregressive generation process (a DTMC) rather than a policy under environment stochasticity (an MDP), and it adds an aggregation layer over the \emph{input population}.
Weiss et al.~\cite{Weiss2018} extract finite automata from recurrent networks via L*-style equivalence queries. Here, our extraction targets a probabilistic surrogate by direct forward unrolling under threshold
pruning, with the soundness gap explicitly bounded by Theorem~\ref{thm:soundness}.

\emph{Statistical and deterministic verification.}
UPPAAL SMC~\cite{David15} and Plasma~\cite{Legay10} estimate
probabilities by simulation; we compute exact per-input reachability
via PRISM's backward induction and use the aggregation layer only to combine those exact verdicts into a population.
Viewed differently, DTMC extraction is symbolic execution of a probabilistic program~\cite{GordonHNR14}, threshold pruning is an abstract interpretation~\cite{CousotCousot77}, and the refinement loop adapts CEGAR~\cite{Clarke00cegar} to the DTMC setting.
Deterministic verifiers, e.g., Reluplex/Marabou~\cite{Katz17,Katz19}, ERAN~\cite{Singh19}, or $\alpha$-$\beta$-CROWN~\cite{Wang21}, verify a single forward pass of a feed-forward network; our target is the
\emph{generative process} over multiple decoding steps, which requires
a probabilistic formalism.

\emph{Population-level guaranties and neural-network testing.}
Conformal
prediction~\cite{Quach2024,MohriHashimoto2024,Angelopoulos23} provides
population-level claims over black-box output distributions. 
It is model-agnostic and does not reference internals, whereas our pipeline accesses the SUT's token probabilities and enables domain-specific PCTL specifications such as phase ordering.
Coverage-guided testing (DeepXplore~\cite{Pei2017}, DeepGauge~\cite{Ma2018}) targets activation-space coverage of classifiers; our coverage curve is analogous to a coverage criterion, but with soundness guaranties.

\section{Verification Pipeline}\label{sec:pipeline}

\subsection{Preliminaries}\label{sec:prelim}

An autoregressive model over a finite vocabulary $V$ factorizes a sequence
distribution left to right, $P(x) = \prod_{t=1}^{n} P(x_t \mid x_{<t})$, each
conditional a softmax over the logits; a temperature $T$ rescales the logits by
$1/T$ before the softmax ($T \to 0$ recovers greedy decoding). This
factorisation maps onto a \emph{discrete-time Markov chain} (DTMC)
$D = (S, s_0, \mathbf{P}, L)$ whose states are token prefixes, whose
transitions carry the conditionals, and whose labeling $L$ marks states with
atomic propositions, because each transition strictly extends the prefix the
chain is acyclic, so reachability probabilities follow from a single
backward-induction pass with no iterative error. We specify properties in
\emph{PCTL}~\cite{DBLP:journals/fac/HanssonJ94} using the quantitative
reachability form $\mathtt{P}_{=?}[\mathtt{F}\,\textit{label}]$, the exact
probability of eventually reaching a \textit{label}-state, computed by
PRISM~\cite{Kwiatkowska11}. An aggregation layer (Sec.~\ref{sec:pac-method})
summarises these per-input verdicts to a coverage curve over the input population.

\subsection{Pipeline Overview}
The pipeline has three independently operable stages (Figure~\ref{fig:pipeline}): 
\emph{DTMC abstraction} of the SUT, \emph{specification verification} by PRISM model checking, and \emph{coverage aggregation} that combines per-input verdicts into a population-level coverage curve.
Each stage reads from and writes to files, so any stage can be re-run or replaced independently.
Two auxiliary components operate alongside: a CEGAR-style refinement loop (Sec.~\ref{sec:cegar}) that re-expands the highest-impact pruned transitions, and a counterexample extractor (Sec.~\ref{sec:witness-method}) that produces an inspectable falsifying trace for any violated specification.
The following presentation is agnostic of the specific SUT. 
The two case studies in Sections~\ref{sec:capp-eval} and~\ref{sec:smiles} instantiate the same pipeline on very different SUTs and specifications.

\begin{figure}[t]
\centering
\resizebox{0.7\textwidth}{!}{%
\begin{tikzpicture}[
  node distance=0.5cm and 0.45cm,
  mainbox/.style={draw, rounded corners=3pt, minimum width=2.0cm,
                  minimum height=0.9cm, align=center,
                  fill=gray!10, thick},
  outbox/.style={draw, rounded corners=3pt, minimum width=1.8cm,
                 minimum height=0.7cm, align=center,
                 fill=white, dashed},
  arr/.style={-Stealth, thick},
  darr/.style={-Stealth, thick, dashed},
]
\node[mainbox] (sut)   {Input\\+ SUT};
\node[mainbox, right=of sut]   (dtmc)   {DTMC\\Extraction};
\node[mainbox, right=of dtmc]  (prism)  {PRISM\\Verification};
\node[mainbox, right=of prism] (pac)    {Coverage\\Aggregation};
\node[mainbox, right=of pac]   (cert)   {Population\\Coverage};

\draw[arr] (sut)   -- (dtmc);
\draw[arr] (dtmc)  -- (prism);
\draw[arr] (prism) -- (pac);
\draw[arr] (pac)   -- (cert);

\draw[darr]
  (prism.south) -- ++(0,-0.55cm)
  -- node[below] {\small CEGAR refinement}
  ++(-2.9cm,0) -- (dtmc.south);

\node[outbox, below=0.9cm of prism] (witness) {Counter-\\example};
\draw[arr] (prism.south) -- (witness.north);
\end{tikzpicture}%
}
\caption{Pipeline overview. A per-input DTMC abstraction is verified by
  PRISM against PCTL specifications; verdicts aggregate into a population
  coverage curve. The CEGAR loop (dashed) re-expands the
  highest-impact pruned transitions; counterexamples are extracted on
  demand from violated specifications.}
\label{fig:pipeline}
\end{figure}

\subsection{DTMC Extraction}\label{sec:extraction}

\paragraph{Tree expansion.}

Given an input, the extractor builds the DTMC by breadth-first expansion.
At each non-terminal state $s$, tokens with conditional probability $\ge \tau$
receive explicit successor states; the remaining mass is routed,
unrescaled, to an absorbing \textsc{low\_prob} state.
Paths whose accumulated probability falls below $\rho$ are also redirected to
\textsc{low\_prob}.
A hard depth limit $d_{\max}$ guaranties the termination. 
Paths that reach it without terminating are redirected to an absorbing \textsc{truncated} state.

Unless stated otherwise, the extractor operates at $T = 1.0$ (the model's native distribution).
Since any fixed $T$ yields a categorical distribution at each step, the extracted object is again a DTMC and Theorem~\ref{thm:soundness} applies unchanged.

\paragraph{Structural validity pruning.}
At each expansion, the extractor checks structural well-formedness
against a SUT-specific grammar of admissible prefixes; for CAPP this
forbids padding before \texttt{<EOS>}, restricts \texttt{<EOS>} to
terminal position, and requires each \texttt{<EOC>} to be preceded by
at least one process token.
Violating prefixes go to an absorbing \textsc{invalid} state.

\paragraph{Critical-state annotation.}
States with top-1/top-2 probability gap $g(s) < 10\%$ are flagged as
\emph{critical}; the flag is exported as a Boolean PRISM label and
queried via \texttt{P=?[F critical]}.
Defaults are per-token threshold $\tau = 0.5\%$, cumulative floor $\rho = 10^{-4}$, and depth limit $d_{\max} = 20$.

\subsection{PRISM Export and PCTL Specifications}\label{sec:ordering}

The pipeline exports the DTMC to the PRISM modelling language, encoding
the four absorbing sinks (\textsc{success}, \textsc{low\_prob},
\textsc{invalid}, \textsc{truncated}) and the critical-state flag as
Boolean labels.
Beyond these generic labels, the extractor accepts domain-specific
labels supplied per case study, a phase-ordering predicate on
\textsc{success} terminals for CAPP (Sec.~\ref{sec:capp-sut}) or a
chemical-validity check for SMILES (Sec.~\ref{sec:smiles}).
Two invariants hold by construction for generic labels:
$P(\textsc{success}) + P(\textsc{low\_prob}) + P(\textsc{invalid}) +
P(\textsc{truncated}) = 1$, and any domain-specification label that
partitions \textsc{success} terminals into \textsc{pass}/\textsc{fail}
satisfies $P(\textsc{pass}) + P(\textsc{fail}) = P(\textsc{success})$.
Table~\ref{tab:properties} lists the PCTL queries used in the CAPP
study; the SMILES study substitutes a chemical-validity label for the
phase-ordering pair.

\begin{table}[t]
  \centering
  \caption{PCTL specifications verified per input (CAPP instantiation).}
  \label{tab:properties}
  \small
  \begin{tabularx}{\textwidth}{XXX}
    \toprule
    Query & Meaning & Category \\
    \midrule
    \texttt{P=?[F success]}    & valid completion         & Generic \\
    \texttt{P=?[F low\_prob]}  & mass to sink             & Generic \\
    \texttt{P=?[F invalid]}    & structural violation     & Generic \\
    \texttt{P=?[F truncated]}  & depth-limit reached      & Generic \\
    \texttt{P=?[F correct]}    & ground-truth sequence    & Correctness \\
    \texttt{P=?[F ordered]}    & $P{\to}S^*{\to}F^*$ holds & Domain (CAPP) \\
    \texttt{P=?[F misordered]} & $P{\to}S^*{\to}F^*$ fails & Domain (CAPP) \\
    \texttt{P=?[F critical]}   & critical state visited   & Confidence \\
    \bottomrule
  \end{tabularx}
\end{table}

\subsection{Population Coverage}\label{sec:pac-method}

Fix a threshold $\theta \in [0,1]$ and let $\hat{\mu}(\theta) = \tfrac{1}{N}\sum_{i=1}^{N} \mathbf{1}[P_D(\phi_i) \ge \theta]$ be the fraction of inputs of the $N$ test whose verified probability of satisfying $\phi$ satisfies $\theta$.
Sweeping $\theta$ over $[0,1]$ trace the \emph{coverage curve} $\hat\mu(\theta)$, a population-level summary of how strongly the specification holds in the input space.
Sec.~\ref{sec:soundness} shows that this curve is conservatively transferred from the abstraction to the SUT itself.

The result is a \emph{population coverage statement} of the form ``$\hat\mu(\theta)$ of all inputs satisfy $\phi$ at threshold $\theta$.''
The population is \emph{covered at level $c$} for $(\phi,\theta)$ if $\hat\mu(\theta) \ge c$.
The trivial case $c=0$ asks only that the curve rule out total failure; an operational deployment criterion fixes $c$ to an application-specific floor.
Unless stated otherwise we report the coverage value $\hat\mu(\theta)$ and flag the deployment level separately.
By Lemma~\ref{lem:conservative} the reported curve already under-estimates the SUT's true coverage, so we report it directly.

\subsection{Soundness and Certification Guarantees}\label{sec:soundness}

The guarantee rests on a single property of the DTMC: it is a sound \emph{under-approximation} of the model, so every reachability probability the pipeline reports is a certified bound on the true value, conservative by construction.
Theorem~\ref{thm:soundness} bounds the abstraction gap; two corollaries discharge a term and record the mass-conservation identity; Lemma~\ref{lem:conservative} carries the per-input bound through the population layer of Sec.~\ref{sec:pac-method}.

\begin{theorem}[Sound under-approximation]\label{thm:soundness}
Let $M$ be the autoregressive model and $D$ the DTMC from
Sec.~\ref{sec:extraction}.
For any reachability property $\phi$ that holds only at \textsc{the success} terminals:
\begin{align*}
  P_D(\phi) \le P_M(\phi) \le
  P_D(\phi) &+ P_D(\textsc{low\_prob}) \\
            &+ P_D(\textsc{invalid}) + P_D(\textsc{truncated}).
\end{align*}
\end{theorem}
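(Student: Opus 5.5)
The idea is to view the model $M$ as a probability distribution over complete generations, i.e.\ over finite token sequences ending in \texttt{<EOS>}, or infinite ones of total mass zero. We then show that the extraction of Sec.~\ref{sec:extraction} partitions this sequence space into disjoint events, one per absorbing state of $D$. The key structural fact is that $D$ is a finite tree unrolled from the empty prefix. Every edge weight is either an exact model conditional $P_M(x_t \mid x_{<t})$, taken unrescaled, or a sink edge. Consequently, for every explicitly expanded state $s$ with prefix $x_{\le t}$, the probability of reaching $s$ in $D$ equals $P_M(x_{\le t})$. This follows by induction on depth, since the path probability in $D$ is the product of the same conditionals along the unique path from the root. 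I will state and prove this prefix-preservation claim first.

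Next, I classify every $M$-trajectory $\omega$ by the first point at which its unique path in the tree leaves the explicitly expanded part. There are four possibilities: (i) $\omega$ reaches a \textsc{success} terminal; (ii) it takes a token with conditional probability below $\tau$, or its cumulative probability drops below $\rho$, which routes it to \textsc{low\_prob}; (iii) it produces a structurally invalid prefix, which routes it to \textsc{invalid}; (iv) it hits depth $d_{\max}$, which routes it to \textsc{truncated}. Because the tree is deterministic in the token sequence, these cylinder events are pairwise disjoint and cover all of $M$'s mass. Their $M$-probabilities coincide with the $D$-reachability probabilities of the corresponding sinks. The reason is that each sink edge carries exactly the summed residual conditional mass, which is the unrescaled routing, and prefix-preservation handles everything upstream. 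This recovers the identity that the four sink probabilities sum to $1$, and it will also serve as the mass-conservation corollary.

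With this partition in hand, both inequalities follow. Since $\phi$ holds only at \textsc{success} terminals, every success terminal labelled $\phi$ is an explicit complete sequence whose $M$-probability equals its $D$-probability, so $P_D(\phi) = P_M(\phi \cap \text{event (i)}) \le P_M(\phi)$. For the upper bound, write
\[
P_M(\phi) = P_M(\phi \cap \text{(i)}) + P_M(\phi \cap \text{(ii)}) + P_M(\phi \cap \text{(iii)}) + P_M(\phi \cap \text{(iv)}).
\]
Bound each of the last three terms by the mass of the corresponding sink, namely $P_D(\textsc{low\_prob})$, $P_D(\textsc{invalid})$ and $P_D(\textsc{truncated})$. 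This gives the stated inequality.

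The main obstacle is making precise what $P_M(\phi)$ means for trajectories that $D$ does not resolve, such as completions hidden behind a pruned or truncated prefix, or invalid prefixes that might later be judged $\phi$-satisfying under the model's own semantics. The argument requires that $\phi$ be a property of complete sequences, evaluated on $M$'s full generations, so that its restriction to each cylinder is measurable and bounded by the cylinder's mass. I would fix this semantics explicitly at the start. I would also note that the \textsc{invalid} term could be dropped if $\phi$ implies structural validity; that refinement is presumably one of the two corollaries.
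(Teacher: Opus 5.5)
Your proposal is correct and follows the same argument as the paper's sketch. Explicit paths carry the model's exact probability, which is your prefix-preservation lemma and gives the lower bound. For the upper bound, each sink's diverted mass is attributed adversarially to $\phi$, and your cylinder-event partition just makes the paper's ``each path is classified exactly once'' step precise. You do not need to assume that infinite generations carry zero mass: any non-terminating mass already lies in the \textsc{truncated} cylinder at depth $d_{\max}$, where it contributes nothing to $\phi$.
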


\begin{proof}[Sketch]
By construction, the four sinks are absorbing and, together with the \textsc{success} terminals, partition every path of $D$: each path is classified exactly once, at the step it is either completed or diverted.
Every explicit path in $D$ maps to a model path with identical probability (branched tokens carry the original softmax mass; remaining mass is rerouted intact to the appropriate sink), giving the lower bound.
For the upper bound, any model path reaching a $\phi$-satisfying terminal either survives all checks (counted by $P_D(\phi)$) or is diverted at some step into a sink.
Because diversion is absorbing, a diverted path contributes no further mass to $\phi$ within $D$ and its fate $M$ is unknown.
Attributing all diverted mass adversarially to $\phi$ gives the bound.
\end{proof}

The two sides of the bound have a simple reading. 
The lower bound holds because every explicit path that the extractor keeps is a real model path with its original probability. 
The upper bound holds because the only mass that the abstraction loses is the mass it diverted into a sink, and the worst case is that all of it would have satisfied $\phi$. 
The interval between is exactly the diverted mass.

One of those sink terms can be eliminated outright.
The structural validity grammar of Sec.~\ref{sec:extraction} is \emph{prefix-closed}. 
Once a prefix is rejected, every extension of it is rejected, too.
A path diverted to \textsc{invalid} can then never reach a $\phi$-satisfying terminal, so its mass does not belong in an upper bound on $P(\phi)$.

\begin{corollary}[Tightened upper bound]\label{cor:invalid}
Under the prefix-closed grammar of Sec.~\ref{sec:extraction},
$P_M(\phi) \le P_D(\phi) + P_D(\textsc{low\_prob}) +
P_D(\textsc{truncated})$.
\end{corollary}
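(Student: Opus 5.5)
We revisit the path decomposition from the proof of Theorem~\ref{thm:soundness} and show that the \textsc{invalid} term contributes nothing to $P_M(\phi)$. First we write $P_M(\phi)$ as a sum over complete model paths (token sequences ending in \texttt{<EOS>}) whose terminal satisfies $\phi$. Second, we sort each such path by the first step at which the extractor stops following it in $D$. Exactly one of four things happens: it survives to a \textsc{success} terminal, it is diverted to \textsc{low\_prob}, it is diverted to \textsc{invalid}, or it is diverted to \textsc{truncated}. The partition argument in the theorem's sketch gives this decomposition. Each class contributes model mass at most the corresponding quantity in $D$, because the diverted prefix mass is carried intact into the sink. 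Hence
\[
P_M(\phi) \le P_D(\phi) + m_{\textsc{low\_prob}} + m_{\textsc{invalid}} + m_{\textsc{truncated}},
\]
where $m_X$ is the model mass of $\phi$-paths whose first diversion is into sink $X$, and $m_X \le P_D(X)$.

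The key step is to show $m_{\textsc{invalid}} = 0$. Suppose a path is diverted to \textsc{invalid} at prefix $w$; the grammar check fails on $w$. By prefix-closure, the check also fails on every extension $w'$ of $w$, including the complete sequence. A \textsc{success} terminal is by definition a complete sequence that passed structural validity at every expansion. So no extension of $w$ is a \textsc{success} terminal in the unpruned tree, and hence none is a $\phi$-terminal, since $\phi$ holds only at \textsc{success} terminals. Every model path through $w$ therefore violates $\phi$, and summing over all rejected prefixes $w$ gives $m_{\textsc{invalid}} = 0$. The other two terms are bounded as in the theorem, which gives the corollary.

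The main obstacle is that $\phi$'s truth on $M$-paths must be defined independently of the extractor's pruning. We take $\phi$ to be evaluated on the full, unpruned prefix tree of $M$, with the same validity grammar and labels applied at terminal nodes. Under that reading, "$\phi$ holds only at \textsc{success} terminals" means "only at complete sequences that are grammar-valid". We would state this convention explicitly.

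We must also state prefix-closure in the direction actually needed. The grammar as given (no padding before \texttt{<EOS>}, \texttt{<EOS>} only in terminal position, \texttt{<EOC>} preceded by a process token) rejects a prefix only because of tokens already present. Appending tokens cannot remove the offending pattern, so the set of rejected prefixes is closed under extension, i.e.\ the set of admissible prefixes is prefix-closed. We would verify this for each rule, since it is the property the argument above uses.
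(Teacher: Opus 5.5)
Your proposal is correct and follows the paper's own argument. You reuse the path decomposition behind the upper bound of Theorem~\ref{thm:soundness}, and note that prefix-closure (rejection is closed under extension) means no path diverted to \textsc{invalid} can reach a $\phi$-satisfying \textsc{success} terminal, so that term drops. Your explicit convention that $\phi$ is evaluated on the unpruned tree of $M$ with the same grammar spells out an assumption the paper leaves implicit.
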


\begin{corollary}[Mass conservation]\label{cor:mass}
For every per-input DTMC, the masses of the four sinks
(\textsc{success}, \textsc{low\_prob}, \textsc{invalid},
\textsc{truncated}) sum to $1$.
\end{corollary}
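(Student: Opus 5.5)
The plan is to show that every unit of probability mass injected at the initial state $s_0$ ends up in exactly one of the four absorbing classes, and that no mass is created or lost along the way. The DTMC $D$ is acyclic (each transition strictly extends the prefix) and finite, with all paths of length at most $d_{\max}+1$. Hence every path from $s_0$ is absorbed in finitely many steps, and the four reachability probabilities are well-defined sums over finite path sets.

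The key step is a local check that every non-sink state $s$ has an outgoing distribution summing to exactly $1$. I would go through the extractor's cases in order.
\begin{enumerate}
\item If the prefix at $s$ violates the grammar, $s$ is redirected to \textsc{invalid} with probability $1$.
\item If $s$ is at depth $d_{\max}$ without terminating, or its accumulated path probability is below $\rho$, it is redirected wholly to \textsc{truncated} or \textsc{low\_prob}, respectively.
\item Otherwise, the tokens with conditional probability $\ge\tau$ get explicit successors carrying their softmax mass. The complement, $1-\sum_{v:\,p(v\mid s)\ge\tau}p(v\mid s)$, is routed unrescaled to \textsc{low\_prob}. Because the temperature-$T$ softmax is a probability distribution over $V$, these weights sum to $1$.
\end{enumerate}
I would also record that an \texttt{<EOS>} successor of a valid prefix is labelled \textsc{success} and made absorbing, so terminal states are themselves sinks.

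Given this local stochasticity, the claim follows by backward induction on the height of states in the acyclic graph. Define $f(s)=\sum_{k\in\{\textsc{success},\textsc{low\_prob},\textsc{invalid},\textsc{truncated}\}}P_s(\mathtt{F}\,k)$.
\begin{itemize}
\item For a sink, $f(s)=1$, since the four labels are mutually exclusive and each sink is absorbing with a self-loop.
\item For an internal state, $f(s)=\sum_{s'}\mathbf{P}(s,s')f(s')=\sum_{s'}\mathbf{P}(s,s')=1$ by the induction hypothesis and local stochasticity.
\end{itemize}
Taking $s=s_0$ gives the corollary. Mutual exclusivity ensures that no path is counted twice; this is the partition fact already used in the proof of Theorem~\ref{thm:soundness}.

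The main obstacle is not the induction but making sure the extractor's edge cases do not leak or duplicate mass. Two cases need care. First, a state could in principle meet both the $\rho$ floor and the depth limit, or a token could be both below $\tau$ and grammar-violating. I would fix a precedence order among the diversion rules (grammar check, then depth, then $\rho$, then $\tau$) so that each unit of mass is sent to a single sink. Second, floating-point softmax sums are not exactly $1$. I would handle this by defining the \textsc{low\_prob} edge as the exact complement of the kept mass, so that stochasticity holds by construction rather than numerically.
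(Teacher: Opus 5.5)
Your proposal is correct and follows the paper's route. The paper gives no separate proof; it asserts the identity ``by construction'' from the partition fact in the proof sketch of Theorem~\ref{thm:soundness}: each retained branch carries its exact softmax mass, the remainder is rerouted intact to an absorbing sink, and each path is classified exactly once. Your local-stochasticity check plus backward induction over the finite acyclic prefix tree makes that argument explicit, and your precedence order and exact-complement convention for the \textsc{low\_prob} edge pin down details the paper leaves implicit.
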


Corollary~\ref{cor:mass} is what makes the four sink masses a valid decomposition rather than four loosely related quantities.
They partition the unit probability, so any one of them is determined by the other three.
The evaluation uses this as a consistency check and reports the \textsc{success} lower bound alongside the certified upper bound of Corollary~\ref{cor:invalid} throughout.

It remains to show that the bound survives aggregation.
The aggregation layer counts all inputs whose verified probability clears a threshold $\theta$.
If each per-input number is a lower bound, the count built from them should itself be a lower bound on the true population fraction.

\begin{lemma}[Conservative coverage]\label{lem:conservative}
Let $\hat\mu(\theta)$ be the coverage computed from the DTMC lower bounds
$P_D(\phi_i)$, and let $\hat\mu_M(\theta)$ be the same fraction computed
from the true model probabilities $P_M(\phi_i)$ on the same inputs.
Then $\hat\mu(\theta) \le \hat\mu_M(\theta)$ for every $\theta$.
\end{lemma}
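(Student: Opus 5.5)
The plan is a pointwise comparison followed by monotonicity of indicators and averaging; no probabilistic argument beyond Theorem~\ref{thm:soundness} is needed. Fix $\theta\in[0,1]$ and the $N$ test inputs. For each input $i$, let $D_i$ be the extracted DTMC and $\phi_i$ the specification, which holds only at \textsc{success} terminals (the case covered by Theorem~\ref{thm:soundness}). The lower half of Theorem~\ref{thm:soundness} then gives $P_{D_i}(\phi_i)\le P_M(\phi_i)$ for every $i$.

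The key step is that $x\mapsto\mathbf{1}[x\ge\theta]$ is non-decreasing on $[0,1]$. If $P_{D_i}(\phi_i)\ge\theta$, then $P_M(\phi_i)\ge P_{D_i}(\phi_i)\ge\theta$ by the per-input bound. Hence
\[
\mathbf{1}[P_{D_i}(\phi_i)\ge\theta]\le\mathbf{1}[P_M(\phi_i)\ge\theta]\quad\text{for each } i.
\]
Summing over $i=1,\dots,N$ and dividing by $N$ yields $\hat\mu(\theta)\le\hat\mu_M(\theta)$. Since $\theta$ was arbitrary, this holds for every $\theta$. Equivalently, the set of inputs certified by the DTMC is a subset of the set certified by the model, and the coverage curve is just the relative size of these sets.

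The main obstacle is not the aggregation itself, which is immediate, but making sure the per-input lower bound is really available for every property the coverage curve is built from. I would state explicitly that the lemma applies to $\phi$ labelling only \textsc{success} terminals, such as \texttt{success}, \texttt{correct}, \texttt{ordered}, and pass-type domain labels.

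For a sink-type label such as \texttt{low\_prob} or \texttt{truncated}, $P_D$ need not lower-bound any meaningful model quantity. For \texttt{misordered}, $P_D$ is still a lower bound, since it is a \textsc{success}-terminal label. I would also note that the DTMC probability must be computed exactly, which holds because the chains are acyclic and backward induction introduces no iterative error (Sec.~\ref{sec:prelim}); otherwise numerical error could flip an indicator near $\theta$.

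A brief remark can close the argument. The same reasoning, applied to the upper bound of Corollary~\ref{cor:invalid}, gives a dual optimistic curve that upper-bounds $\hat\mu_M(\theta)$. Together with the lemma, this sandwiches the SUT's coverage curve between two computable curves.
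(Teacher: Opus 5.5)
Your proposal is correct and follows essentially the same route as the paper's proof: the per-input bound $P_D(\phi_i)\le P_M(\phi_i)$ from Theorem~\ref{thm:soundness}, then pointwise monotonicity of the threshold indicator, then summing and dividing by $N$. Your extra remarks are sound but not needed for the lemma itself: restricting to \textsc{success}-terminal properties, noting that backward induction is exact, and the dual upper curve obtained from Corollary~\ref{cor:invalid}.
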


\begin{proof}
By Theorem~\ref{thm:soundness}, $P_D(\phi_i) \le P_M(\phi_i)$ for every
input $i$, so $\mathbf{1}[P_D(\phi_i) \ge \theta] \le
\mathbf{1}[P_M(\phi_i) \ge \theta]$ pointwise; summing over the test set
gives $\hat\mu(\theta) \le \hat\mu_M(\theta)$.
\end{proof}

Pessimism in the abstraction propagates as pessimism in the coverage curve: abstraction error can only make the reported coverage more conservative, yet never unsound.
A tighter $\tau$ or a CEGAR pass can only \emph{raise} the certified fraction.
The one caveat is directional.
The guarantee is built for properties that hold at \textsc{the success} terminals, where $P_D$ \emph{underestimates}.
For a \emph{violation} property such as \textsc{misordered}, this under-estimation runs the other way: a non-zero per-input verdict proves a violation exists, but a
zero verdict does not certify its absence.

\subsection{CEGAR-Style Refinement}\label{sec:cegar}

To tighten the upper bound of Theorem~\ref{thm:soundness}, we adapted the CEGAR
principle~\cite{Clarke00cegar} to the DTMC setting.
Each round ranks all pruned $(s,t)$ pairs by impact
$\Pr[\text{reach}\;s] \cdot p_{\text{prune}}(s,t)$,
re-expands the top-$K$ pairs, splices new subtrees into $D$, re-runs PRISM, and
halts when $P_D(\textsc{low\_prob}) \le \varepsilon_{\text{target}}$ or the
round budget is exhausted.

\subsection{Counterexample Extraction}\label{sec:witness-method}

For any violation, the pipeline extracts the falsifying
trace with the highest joint probability by running Dijkstra's
algorithm on the DTMC with $-\log p$ edge weights, then decodes the
state-ID path back to the SUT token vocabulary.

\section{Evaluation}

The evaluation is organized around four research questions that span
both case studies. We address RQ1--3 as part of case study 1. Case study 2 focuses on portability, while also offering a second perspective on RQ1.

\noindent\textbf{RQ1 (Specification informativeness)}: Does the pipeline expose failure modes that test-set accuracy cannot report (Secs.~\ref{sec:rq1} and~\ref{sec:rq4-oracle})?

\noindent\textbf{RQ2 (Population coverage)}: Can per-input PRISM verdicts be aggregated into a population-level deployment guarantee (Sec.~\ref{sec:rq2})?

\noindent\textbf{RQ3 (Counterexamples and adaptive refinement)}: Does the pipeline produce inspectable counterexamples for violated specifications, and does CEGAR-style refinement tighten the soundness interval (Sec.~\ref{sec:rq3})?

\noindent\textbf{RQ4 (Portability)}: Does the methodology generalize to an SUT with a substantially larger vocabulary and a semantic oracle external to the model (Sec.~\ref{sec:smiles})?

\subsection{Case Study 1: CAPP}\label{sec:capp-eval}

\subsubsection{System Under Test}\label{sec:capp-sut}

The SUT is the trained GPT-2-based CAPP model of Stathatos et al.~\cite{STATHATOS2026103233}. 
We treat it as a black box and change neither its architecture nor its training.
The CAPP task maps a six-feature discrete part description (geometry, holes, threads, surface finish, tolerance, batch size) to an ordered sequence of manufacturing operations.
The input space is fully enumerable: 7{,}840 unique parts arise from the Cartesian product of the feature domains.
Outputs are sequences of process tokens delimited by \texttt{<EOC>} and terminated by \texttt{<EOS>}, encoding one or more alternative process chains.
The architecture is a 4-layer GPT-2 with 64-dimensional embeddings and a 53-token vocabulary (29 feature tokens, 20 process tokens, 4 special tokens).
The authors released eight checkpoints trained on dataset fractions from 1\% to 70\%; the model reaches 100\% sequence accuracy on the held-out test set from 15\% training onward.

As an additional domain specification, we consider the phase ordering of manufacturing process steps.
The 20 process tokens partition into three manufacturing phases:
\emph{Primary} (7 tokens, shape-giving), \emph{Secondary} (9 tokens,
subtractive refinement), and \emph{Finishing} (4 tokens, surface quality).
By manufacturing convention, operations within each chain must follow the
partial order $\text{Primary} \to \text{Secondary}^* \to \text{Finishing}^*$:
exactly one Primary operation, then zero or more Secondary, then zero or more Finishing, without backward phase transition.
This constraint is external to the neural model, which is not
trained to enforce it explicitly, and is the domain-specific
property of interest for the CAPP case study, encoded as the
\texttt{ordered}/\texttt{misordered} PCTL labels of
Table~\ref{tab:properties}.

We evaluated all eight training fractions (Table~\ref{tab:training-dtmc}, omitting the fixed 1{,}176-part validation split), building a per-input DTMC for every test sample and verifying seven of the eight specifications of Table~\ref{tab:properties} with PRISM (\textsc{critical} is obtained by forward propagation), yielding 37{,}554 DTMCs and 262{,}878 checking runs with zero solver failures.
All experiments ran on an Apple M2 Max (32\,GB RAM) using PRISM 4.10 with the explicit-state engine and four-way parallelism.

\begin{table}[t]
  \centering
  \caption{Training configurations with accuracy~\cite{STATHATOS2026103233}
    and DTMC structural statistics.}
  \label{tab:training-dtmc}
  \small
  \begin{tabular}{@{}rrrrr||rrrr@{}}
    \toprule
    & \multicolumn{4}{c}{Training} & \multicolumn{4}{c}{DTMC structure} \\
    \cmidrule(lr){2-5} \cmidrule(lr){6-9}
    Config & Train & Test & Tkn acc. & Seq.\ acc. &
             Avg states & Max states & Avg paths & Multi-branch \\
    \midrule
    1\%  & 78      & 6{,}586 & 93.5\%  & 47.8\%  & 933.6 & 2{,}208 & 64.7 & 100\%  \\
    5\%  & 392     & 6{,}272 & 99.7\%  & 96.1\%  & 55.4  & 560     & 5.2  & 82\%   \\
    10\% & 784     & 5{,}880 & 100.0\% & 99.7\%  & 16.3  & 54      & 1.0  & 1.8\%  \\
    15\% & 1{,}176 & 5{,}488 & 100.0\% & 100.0\% & 16.2  & 31      & 1.0  & 0.6\%  \\
    20\% & 1{,}568 & 5{,}096 & 100.0\% & 100.0\% & 16.3  & 37      & 1.0  & 1.5\%  \\
    30\% & 2{,}352 & 4{,}312 & 100.0\% & 100.0\% & 16.2  & 23      & 1.0  & 0.05\% \\
    50\% & 3{,}920 & 2{,}744 & 100.0\% & 100.0\% & 16.3  & 23      & 1.0  & 0.04\% \\
    70\% & 5{,}488 & 1{,}176 & 100.0\% & 100.0\% & 18.0  & 37      & 1.0  & 0\%    \\
    \bottomrule
  \end{tabular}
\end{table}

\subsubsection{RQ1: Failure Modes Invisible to Accuracy}\label{sec:rq1}

We answer along two dimensions: \emph{probability-mass divergence}
between accuracy and the specification
(Table~\ref{tab:mass}) and \emph{structural confidence diagnostics}
derived from the DTMC (Table~\ref{tab:training-dtmc}).

Accuracy and verification verdicts diverge.
From 15\% training onward, test-set sequence accuracy is 100\%, yet mean
$P(\textsc{success})$ ranges from 0.907 (15\%) to 0.974 (70\%).
By Theorem~\ref{thm:soundness}, the 70\% model's true reachability is
certified to lie in $[0.974, 1.000]$, with the gap entirely attributable
to \textsc{low\_prob}.
The pipeline separates two failure modes that accuracy conflates: exact
ground-truth match (what accuracy measures) and concentration of
probability mass on \emph{any} valid completion.

At the same time, structural validity is essentially preserved.
$P(\textsc{invalid}) = 0$ in every configuration except 1\% (0.009).
Even under severe data starvation, the model has implicitly learned the
syntactic well-formedness of the output language; this is itself a
verified property, not a stochastic observation.

$P(\textsc{success})$ flattens from 30\% training onward, gaining 0.019 between 30\% and 50\% and a further 0.006 between 50\% and 70\% (0.949, 0.968, 0.974).
A per-token threshold sweep on the 70\% model (200-input subsample) decomposes the remaining gap into two parts with different origins.
The first is a threshold artifact, and it is small.
Across $\tau \in \{0.5, 0.25, 0.1\}\%$ the value barely moves ($P(\textsc{success}) = 0.9746 \!\to\! 0.9747$), and only at $\tau = 0.01\%$ does appreciable mass re-enter \textsc{success} ($0.981$, with \textsc{low\_prob} falling from $2.5\%$ to $1.9\%$) before saturating again.

The second part is irreducible.
A residual $\approx\!1.9\%$ survives even at $\tau = 10^{-4}$ and cumulative floor $\rho = 10^{-6}$.
This residual is not an artifact of pruning, but a genuine sub-threshold \emph{dispersion}.
The model spreads probability across valid \emph{alternative} completions, each individually below $\tau$.
This dispersion is steep only when training data is scarce.
At 1\% training, just 36\% of the success mass concentrates on the canonical target, rising to 79\% at 5\% and above 99.5\% from 10\% onward.
For a deployed model the dispersion is therefore negligible.
Yet, it is invisible to both greedy decoding and test accuracy, which see only the canonical target whether it holds 36\% or 99.5\% of the mass.

Finally, the DTMC structure is a confidence diagnostic.
Table~\ref{tab:training-dtmc} reports structural statistics: the collapse from 934 mean states (1\%) to 18 (70\%), with the multi-branch fraction falling from 100\% to 0\%, traces the transition from a diffuse model to a near-deterministic one.
The critical-state diagnostic sharpens the picture: 
at 1\% training,
97.9\% of inputs have at least one critical state (mean 14.2, max 107); 
at 5\% this drops to 5.5\%; 
from 15\% onward, none is observed.
The pipeline thus certifies not only that the 70\% model is accurate but that it never approaches a decision boundary.
RQ1 is addressed again in the SMILES case study (Sec.~\ref{sec:rq4-oracle}) for which the
structural specification is \emph{not} sufficient.

\begin{table}[t]
  \centering
  \caption{Mean probability mass decomposition (Corollary~\ref{cor:mass}; max
    per-input deviation $<10^{-10}$ across all 37{,}554 DTMCs).}
  \label{tab:mass}
  \small
  \begin{tabularx}{\textwidth}{@{}r*{5}{>{\raggedleft\arraybackslash}X}@{}}
    \toprule
    Config & $P(\textsc{succ.})$ & $P(\textsc{low\_p.})$ &
             $P(\textsc{inv.})$ & $P(\textsc{trunc.})$ & Sum \\
    \midrule
    1\%  & 0.128 & 0.863 & 0.009 & 0.000 & 1.000 \\
    5\%  & 0.629 & 0.371 & 0.000 & 0.000 & 1.000 \\
    10\% & 0.845 & 0.155 & 0.000 & 0.000 & 1.000 \\
    15\% & 0.907 & 0.093 & 0.000 & 0.000 & 1.000 \\
    20\% & 0.922 & 0.079 & 0.000 & 0.000 & 1.000 \\
    30\% & 0.949 & 0.051 & 0.000 & 0.000 & 1.000 \\
    50\% & 0.968 & 0.032 & 0.000 & 0.000 & 1.000 \\
    70\% & 0.974 & 0.026 & 0.000 & 0.000 & 1.000 \\
    \bottomrule
  \end{tabularx}
\end{table}

\subsubsection{RQ2: Population Coverage and Operating Envelope}\label{sec:rq2}

\paragraph{Per-input ordering verdicts.}
From 10\% training onwards, $P_D(\textsc{misordered}) = 0$ across the entire test set: every retained success path is then correctly ordered, with only 0.3\% of retained probability mass misordered even at 1\% training. 
Because \textsc{misordered} is a violation property, the zero verdict guarantees that no \emph{retained} path misorders, but does not certify that the pruned sub-threshold mass is free of misordering.
This ordering property is invisible to sequence-level accuracy, and the coverage layer turns it into a population claim.

\paragraph{Population coverage.}
Table~\ref{tab:pac} reports the population ordering coverage $\hat\mu(\theta)$ at $\theta = 0.90$.
A sharp transition occurs between 10\% and 30\% training: coverage climbs from $\hat\mu = 0.252$ at 10\% to $\hat\mu = 0.971$ at 30\%, the smallest fraction at which the population is covered at the deployment level $c = 0.90$.
From 50\% onwards, every test input meets the threshold ($\hat\mu = 1.000$).
Because the curve is built from DTMC lower bounds, Lemma~\ref{lem:conservative} makes it a conservative estimate of the SUT's true coverage.

\begin{table}[t]
  \centering
  \caption{Population ordering coverage $\hat\mu$ for
    $P(\textsc{ordered}) \ge 0.90$ across training fractions
    ($N$ = test-set size).}
  \label{tab:pac}
  \small
  \begin{tabularx}{\textwidth}{@{}l*{8}{>{\raggedleft\arraybackslash}X}@{}}
    \toprule
    Config     & 1\%   & 5\%   & 10\%  & 15\%  & 20\%  & 30\%           & 50\%  & 70\% \\
    \midrule
    $N$        & 6{,}586 & 6{,}272 & 5{,}880 & 5{,}488 & 5{,}096 & 4{,}312 & 2{,}744 & 1{,}176 \\
    $\hat\mu$  & 0.000 & 0.004 & 0.252 & 0.576 & 0.705 & \textbf{0.971} & 1.000 & 1.000 \\
    \bottomrule
  \end{tabularx}
\end{table}

\paragraph{Operating envelope over decoder temperature.}
A single-temperature coverage value answers a single what-if question; a sweep over decoder temperatures characterises a \emph{safe operating envelope}.
We sweep $T \in \{0.5, 0.7, 1.0, 1.3, 2.0\}$ for the 1\%, 10\%, 30\%, and
70\% configurations and re-run the full PCTL suite at each temperature.
For the 30\% and 70\% models coverage stays at $\hat\mu \ge 0.97$ for $T \le 1.0$ and drops sharply at $T = 1.3$.
The 1\% model never reaches $\hat\mu = 0.90$; even at $T = 0.5$ its coverage is $0.789$.
The certified safe envelope is therefore $T \le 1.0$: above this the increased entropy places enough mass below $\tau$ to break the ordering coverage.

\paragraph{Best-of-$N$ as a certifiable decoder.}
The same propagated terminal probabilities also score, in closed form and
with no extra model runs, the success rate of drawing $N$ samples and
keeping a verifier-accepted one \cite{cobbe2021verifiers}, $\mathrm{pass}@N = \mathbb{E}_x[\,1-(1-P_x(\phi))^{N}]$.
At $T = 1.0$ best-of-$N$ never overtakes greedy decoding in the 1\% regime (the
sampling mass is too diffuse), unless we sharpen the decoder (Table~\ref{tab:bestofn-temp}).
At $T = 0.5$, best-of-$10$ makes every input yield a verifier-\textsc{ordered}
completion within ten draws ($\mathrm{pass}@10(\textsc{ordered}) = 1.00$) and
surfaces $1.87$ distinct valid plans per input against greedy's one.
Since ordering-validity, unlike correctness, is checkable at run time, the verifier doubles as selector, so this \textsc{ordered} result is the deployable claim; the \textsc{correct} column ($0.56$ at $T = 0.5$ against greedy's $0.47$) is an oracle upper bound on how much an ideal selector could additionally recover.
This makes best-of-$N$ at $T < 1.0$ the recommended decoder in the data-starved regime, where greedy decoding is weakest, i.e., the 1\% checkpoint is a deliberate worst case and not a deployment target. 
For well-trained (${\ge}15\%$) models, greedy decoding is already near-deterministic and the decoding gap closes, so there the pipeline's value shifts from recovering plans to certifying that little hidden mass remains.

\begin{table}[t]
  \centering
  \caption{Best-of-$N$ for the 1\% model ($N{=}10$); greedy correctness is $0.47$ at every $T$. $\mathrm{pass}@10$ on \textsc{correct} (oracle) and \textsc{ordered} (deployable), and distinct valid plans.}
  \label{tab:bestofn-temp}
  \small
  \begin{tabularx}{\textwidth}{@{}r*{3}{>{\raggedleft\arraybackslash}X}@{}}
    \toprule
    $T$ & $\mathrm{pass}@10(\textsc{correct})$ & $\mathrm{pass}@10(\textsc{ordered})$ & coverage$@10$ \\
    \midrule
    0.5 & 0.56 & 1.00 & 1.87 \\
    0.7 & 0.54 & 1.00 & 2.20 \\
    1.0 & 0.20 & 0.52 & 0.87 \\
    1.3 & 0.05 & 0.38 & 0.49 \\
    2.0 & 0.00 & 0.34 & 0.39 \\
    \bottomrule
  \end{tabularx}
\end{table}

\subsubsection{RQ3: Counterexamples and Adaptive Refinement}\label{sec:rq3}

The maximum-likelihood counterexample algorithm turns a quantitative
verdict into an inspectable falsifying trace.
We extract the most likely misordering trace from the DTMC of the
1\%-model input with the highest misordering probability.
The decoded trace (Figure~\ref{fig:witness}) carries joint probability
$3.5 \times 10^{-3}$.
The model emits a Finishing-phase operation before any Primary
operation, violating the phase-ordering property.
This trace is never selected by greedy decoding and is invisible to
test-set evaluation.
The counterexample algorithm answers not just \emph{whether} a
violation exists, but \emph{which} trace is most likely, and reports its formal probability. 
The result is a per-input artifact that a deployment engineer can inspect.

\begin{figure}[t]
  \centering
  \resizebox{0.7\textwidth}{!}{%
  \begin{tikzpicture}[
    tok/.style={draw, rounded corners=2pt, fill=gray!10,
                minimum height=0.55cm, inner sep=4pt,
                font=\small\ttfamily},
    viol/.style={tok, fill=red!20, thick},
    plab/.style={midway, above, font=\scriptsize},
    arr/.style={-Stealth, thick},
  ]
  \node[tok] (init) {init};
  \node[viol, right=0.6cm of init] (g)  {[F] 5-axis Grinding};
  \node[tok, right=0.6cm of g]     (e1) {<EOC>};
  \node[tok, right=0.6cm of e1]    (sc) {[P] Sand Casting};
  \node[tok, below=0.95cm of init.west, anchor=west] (e2) {<EOC>};
  \node[tok, right=0.6cm of e2]    (ic) {[P] Investment Casting};
  \node[tok, right=0.6cm of ic]    (e3) {<EOC>};
  \node[tok, right=0.6cm of e3]    (eos) {<EOS>};
  \draw[arr] (init) -- node[plab] {.03} (g);
  \draw[arr] (g)    -- node[plab] {.85} (e1);
  \draw[arr] (e1)   -- node[plab] {.71} (sc);
  \draw[arr] (sc.south) -- ++(0,-0.3)
    -| node[pos=0.25, above, font=\scriptsize] {.63} (e2.north);
  \draw[arr] (e2)   -- node[plab] {.45} (ic);
  \draw[arr] (ic)   -- node[plab] {.80} (e3);
  \draw[arr] (e3)   -- node[plab] {.76} (eos);
  \end{tikzpicture}}
  \caption{Maximum-likelihood misordering counterexample for the 1\% model.
    Edge labels are (rounded) conditional next-token probabilities.
    The first operation is \emph{5-axis Grinding} (Finishing
    phase, highlighted), before any Primary operation.}
  \label{fig:witness}
\end{figure}

The CEGAR loop then tightens the soundness interval adaptively.
Here, we run the CEGAR loop on $N = 50$ inputs per configuration (top-$K=5$, $\varepsilon_{\text{target}}=10^{-3}$) and report the distribution of $P_D(\textsc{low\_prob})$ at rounds 0 and 5 (Table~\ref{tab:refinement_dist}).
Two patterns are robust.
First, per-sample dispersion at round~0 grows as training fraction shrinks, and refinement contracts the inter-quartile range everywhere.
Second, configurations whose round-0 median is large (1\% and 5\%) show the smallest relative improvement: five rounds cut the 1\% median by only a few percent, while a 30\%-model sample benefits by 25--35\%.
Refinement is therefore most \emph{effective} on the moderate configurations, even though the certified interval is \emph{widest} on the data-starved ones: there much of the gap is the irreducible sub-threshold dispersion of Sec.~\ref{sec:rq1}, which bounded top-$K$ re-expansion cannot reclaim. 
Closing it would need a lower threshold $\tau$ (with diminishing returns, Sec.~\ref{sec:rq1}) rather than more refinement rounds.

\begin{table}[t]
\centering
\caption{CEGAR refinement: distribution of $P_D(\textsc{low\_prob})$ across
  $N=50$ sampled inputs at the initial DTMC (round 0) and after 5 refinement
  rounds.  Quartiles are taken over the per-sample verification results.}
\label{tab:refinement_dist}
\small
\begin{tabularx}{\textwidth}{@{}r*{6}{>{\raggedleft\arraybackslash}X}@{}}
\toprule
 & \multicolumn{3}{c}{Round 0} & \multicolumn{3}{c}{Round 5} \\
\cmidrule(lr){2-4} \cmidrule(lr){5-7}
Config & $P_{25}$ & Median & $P_{75}$ & $P_{25}$ & Median & $P_{75}$ \\
\midrule
  1\%  & 0.791 & 0.929 & 0.975 & 0.761 & 0.919 & 0.961 \\
  5\%  & 0.312 & 0.396 & 0.516 & 0.235 & 0.313 & 0.431 \\
  10\% & 0.127 & 0.163 & 0.219 & 0.090 & 0.120 & 0.174 \\
  15\% & 0.057 & 0.090 & 0.121 & 0.039 & 0.067 & 0.093 \\
  20\% & 0.054 & 0.076 & 0.102 & 0.037 & 0.055 & 0.079 \\
  30\% & 0.024 & 0.048 & 0.066 & 0.016 & 0.035 & 0.051 \\
  50\% & 0.024 & 0.031 & 0.039 & 0.018 & 0.023 & 0.030 \\
  70\% & 0.019 & 0.027 & 0.033 & 0.014 & 0.021 & 0.026 \\
\bottomrule
\end{tabularx}
\end{table}

\subsection{Case Study 2: SMILES Molecular Generation}\label{sec:smiles}

Case study 1 exercises the pipeline on a small, fully enumerable input space with a domain specification (phase ordering) defined over an output grammar the SUT already enforces almost perfectly.
 The question is whether the extraction of the DTMC or the specification layer depends on these favorable conditions.
RQ4 asks whether the methodology generalises to a SUT that stresses both: a much larger vocabulary that produces much larger DTMCs, and a \emph{semantic} specification (chemical validity) that the output grammar does \emph{not} enforce by construction.
We answer in two parts: an \emph{operational portability} assessment, and a \emph{specification-informativeness} assessment showing that the semantic specification surfaces failures invisible both to a structural specification and to accuracy.

\subsubsection{System Under Test}\label{sec:smiles-sut}

The SUT is \texttt{gpt2\_zinc\_87m}\footnote{Model: \url{https://huggingface.co/entropy/gpt2\_zinc\_87m}}, an 87\,M-parameter GPT-2 pretrained on the ZINC molecular database with a $\approx 2{,}700$-token byte-pair encoding (BPE), which is roughly $50\times$ the CAPP vocabulary.
Test inputs are 200 short prefix strings in the SMILES notation (simplified molecular-input line-entry system)~\cite{DBLP:journals/jcisd/Weininger88}, drawn from the ZINC alphabet (single atoms and two-atom fragments, e.g., \texttt{C}, \texttt{N}, \texttt{CC}, \texttt{c1}); the pipeline builds one DTMC per prompt.
Extraction parameters are unchanged ($\tau = 0.5\%$, $\rho = 10^{-4}$, $d_{\max} = 20$).

As a domain specification, we consider the chemical validity as assessed by RDKit's MolFromSmiles.
That is, RDKit is our \emph{test oracle}~\cite{Barr15}: a black-box pass/fail verdict procedure, independent of the SUT, that the pipeline consumes as a PCTL label.
This oracle enforces semantic constraints (valence, ring-closure consistency, atom-count balance) that are \emph{not} part of the SMILES grammar, so structural completion and chemical validity are decoupled. 
This gap is known for neural SMILES generators, which can emit syntactically complete but chemically invalid strings~\cite{krenn2020selfies}.
A \texttt{P=?[F valid\_smiles]} label replaces the CAPP \texttt{ordered}/\texttt{misordered} pair in Table~\ref{tab:properties}. 
No other pipeline component changes.

Against CAPP's 53 tokens and 16-state DTMCs, SMILES uses $\approx\!2{,}700$ BPE tokens and produces DTMCs of mean 13{,}150 states (max 37{,}397), with mean extraction 485\,s and PRISM 43\,s per prompt vs.\ sub-second for CAPP.

\subsubsection{RQ4: Operational Portability}\label{sec:rq4}

All 200 PRISM instances completed without solver failure, matching the
0\% error rate on CAPP.
The extracted DTMCs are substantially larger than CAPP's: mean 13{,}150 states
(p25 = 10{,}413; median = 12{,}408; p75 = 16{,}287; max = 37{,}397), roughly
$800\times$ the size of well-trained CAPP DTMCs and $14\times$ the size of the
worst-case 1\%-trained CAPP DTMCs.
Build time is dominated by extraction (mean 485\,s, max 1{,}364\,s) rather than verification (mean 43\,s, max 370\,s); both scale super-linearly with vocabulary size but remain tractable on a laptop-class machine.
Still, a full sweep over a ZINC-sized prompt set is beyond the scope of this work.

The pipeline is therefore vocabulary-agnostic:
no component assumes a fixed token set or a small state space, and the soundness theorem and CEGAR loop carry over unchanged.
Scale also makes the depth limit binding for a non-trivial fraction of prompts: mean $P_D(\textsc{truncated}) = 0.018$ across the 200 inputs, whereas on CAPP it was $\le 10^{-3}$ throughout (Table~\ref{tab:mass}).
By Theorem~\ref{thm:soundness} this widens the certified soundness interval. The gap here is dominated by the \textsc{truncated} term rather than the diffuse \textsc{low\_prob} dispersion of case study~1, so it is the component most reducible by CEGAR or a larger $d_{\max}$.

\subsubsection{RQ4+RQ1: Specification Informativeness}\label{sec:rq4-oracle}

The RDKit specification exposes a failure class that does not have an analog in the CAPP study.
Among the 90 prompts for which at least one chemically valid terminal is produced, mean $P(\textsc{valid\_smiles}) = 0.084$ while mean $P(\textsc{success}) = 0.248$.
Conditional on a structurally complete terminal, roughly $66\%$ ($1 - 0.084/0.248$) of probability mass corresponds to invalid molecules.
At the prompt level, 181 of 200 prompts produce at least one chemically invalid but structurally complete terminal; 87 produce both valid and invalid terminals; 94 produce only invalid structurally complete terminals; and a further 16 produce no structurally complete terminal at all. 
For 110 prompts in total, no chemically valid SMILES is ever produced.

This is the structural-vs-semantic validity gap that the CAPP study
cannot illustrate.
In CAPP, $P(\textsc{invalid}) \approx 0$ from 5\% training onwards:
the token grammar happens to encode well-formedness, so the
structural-validity check of Sec.~\ref{sec:extraction} carries most of
the load.
In SMILES, structural completion is necessary but not sufficient for
semantic validity, and the model reliably misallocates confidence to
chemically meaningless strings.
Test-set accuracy, i.e., exact match against a reference, would
not distinguish these cases at all.
$P(\textsc{success})$ alone would give a misleadingly optimistic
estimate; the gap between $P(\textsc{success})$ and $P(\textsc{valid\_smiles})$ is the formal measure of the discrepancy.

From a methodology viewpoint, the RDKit check is integrated as an external
test oracle: integrating it required no change to extraction, PRISM
export, or coverage aggregation.
This shows the oracle-independence of the methodology in one case, supporting the more general claim of Sec.~\ref{sec:intro}: any executable
oracle that maps a terminal sequence to a pass/fail verdict can serve as
the domain specification, and the same three artifacts (DTMC
abstraction, PCTL specification, population coverage) serve different
domains with no methodology rework.

\section{Threats to Validity}\label{sec:threats}

\emph{Internal.}
The DTMC under-approximates the SUT, and the soundness interval widens
monotonically with $\tau$ and contracts with $\rho$.
The CEGAR loop mitigates this by tightening the interval to a target
$\varepsilon$.
The critical-state gap threshold (10\%) is used only as a diagnostic flag
and does not enter the soundness theorem; we verified that critical-state
rankings are qualitatively stable under gap thresholds in $\{5\%, 10\%,
20\%\}$ on the only configuration with non-trivial critical-state mass
(1\% training).

\emph{Construct.}
The CAPP ordering specification encodes $P \to S^* \to F^*$ and does
not capture richer manufacturing constraints (resource contention,
tool-change costs, batch-dependent process choice).
The SMILES specification is RDKit's MolFromSmiles, which is
the de facto chemical-validity check in the community but is not,
strictly speaking, a formal specification.
The population coverage assumes that the test split is representative of the
deployment distribution; for CAPP the input space is finite and fully
enumerable, so coverage on the held-out set is an exact enumerable
fraction for any random split.
For SMILES, the 200 prompts used here are not a random sample of any
deployment distribution, and the coverage is read accordingly: it
characterises the gap between $P(\textsc{success})$ and the
specification, not a population coverage guarantee, and we therefore
report raw $P(\textsc{success})$ and $P(\textsc{valid\_smiles})$ without
a population layer.

\emph{External.}
The CAPP and SMILES studies span very different vocabularies (53 vs.\ $\approx 2{,}700$ tokens), DTMC sizes ($800\times$), and specification types (syntactic--temporal vs.\ semantic--external).
This supports a portability claim, i.e., the same pipeline runs on both with no source-code change, but not a generality claim across all autoregressive transformers.
We have not yet exercised long-horizon generation ($d \gg 20$), non-GPT
architectures, or diverse sampling schemes.

\section{Conclusion}\label{sec:conclusion}

We have presented a probabilistic model-checking pipeline for autoregressive transformers that combines a DTMC abstraction with a soundness theorem, exact PCTL verification, and population coverage over the input space. 
Where test-set accuracy scores a single greedy trajectory against a reference, it is silent on the probability mass a model places elsewhere and on domain constraints the reference never exercises. 
Our pipeline is both visible and quantifiable.
It recasts the deployment question as whether the gap between a model's confident output and a domain specification is acceptable, and reports a population coverage curve for it.
Its natural place is at development and model-selection time: certified coverage and best-of-$N$ analysis inform the decoder and the training-data budget before deployment, rather than monitoring a fixed model at run time.
Two properties make our approach practical. 
The specification is supplied by the application rather than the method: a temporal ordering property and an external executable oracle plug into the same three artifacts with no pipeline change, so the methodology is indifferent to domain. 
The soundness guarantee is conservative by construction, so the reported coverage can be trusted to understate rather than overstate the truth.
We evaluate this on two contrasting systems, differing by an order of magnitude in vocabulary and by specification type.
The results give evidence of portability, though not yet generality across all architectures and decoding schemes.
Future work includes longer-horizon generation, richer property classes, and sampling strategies beyond temperature scaling. 

\begin{credits}
\subsubsection{\ackname}
This work is funded by the European Union under grant agreement number 101091783 (MARS Project).

\subsubsection{\discintname}
The authors declare that they have no competing financial interests.
\end{credits}

\bibliographystyle{splncs04}
\bibliography{refs}

\begin{thebibliography}{10}
\providecommand{\url}[1]{\texttt{#1}}
\providecommand{\urlprefix}{URL }
\providecommand{\doi}[1]{https://doi.org/#1}

\bibitem{Angelopoulos23}
Angelopoulos, A.N., Bates, S., Fisch, A., Lei, L., Schuster, T.: Conformal risk
  control. In: International Conference on Learning Representations ({ICLR})
  (2024)

\bibitem{bacci2020probabilistic}
Bacci, E., Parker, D.: Probabilistic guarantees for safe deep reinforcement
  learning. In: International Conference on Formal Modeling and Analysis of
  Timed Systems. pp. 231--248. Springer (2020).
  \doi{10.1007/978-3-030-57628-8_14}

\bibitem{Barr15}
Barr, E.T., Harman, M., McMinn, P., Shahbaz, M., Yoo, S.: The oracle problem in
  software testing: {A} survey. IEEE Transactions on Software Engineering
  \textbf{41}(5),  507--525 (2015). \doi{10.1109/TSE.2014.2372785}

\bibitem{Clarke00cegar}
Clarke, E., Grumberg, O., Jha, S., Lu, Y., Veith, H.: Counterexample-guided
  abstraction refinement. In: Proc.\ 12th Int.\ Conf.\ Computer Aided
  Verification (CAV). LNCS, vol.~1855, pp. 154--169. Springer (2000).
  \doi{10.1007/10722167_15}

\bibitem{cobbe2021verifiers}
Cobbe, K., Kosaraju, V., Bavarian, M., Chen, M., Jun, H., Kaiser, L., Plappert,
  M., Tworek, J., Hilton, J., Nakano, R., Hesse, C., Schulman, J.: Training
  verifiers to solve math word problems. arXiv preprint arXiv:2110.14168
  (2021). \doi{10.48550/arXiv.2110.14168}

\bibitem{CousotCousot77}
Cousot, P., Cousot, R.: Abstract interpretation: A unified lattice model for
  static analysis of programs by construction or approximation of fixpoints.
  In: Proc.\ 4th {ACM} Symp.\ Principles of Programming Languages ({POPL}). pp.
  238--252. ACM (1977). \doi{10.1145/512950.512973}

\bibitem{David15}
David, A., Larsen, K.G., Legay, A., Miku\v{c}ionis, M., Poulsen, D.B.: Uppaal
  {SMC} tutorial. International Journal on Software Tools for Technology
  Transfer  \textbf{17},  397--415 (2015). \doi{10.1007/s10009-014-0361-y}

\bibitem{GordonHNR14}
Gordon, A.D., Henzinger, T.A., Nori, A.V., Rajamani, S.K.: Probabilistic
  programming. In: Proc.\ Future of Software Engineering (FOSE). pp. 167--181.
  ACM (2014). \doi{10.1145/2593882.2593900}

\bibitem{DBLP:conf/esann/GrossS24}
Gross, D., Spieker, H.: Safety-oriented pruning and interpretation of
  reinforcement learning policies. In: 32nd European Symposium on Artificial
  Neural Networks, Computational Intelligence and Machine Learning ({ESANN})
  (2024). \doi{10.14428/esann/2024.ES2024-71}

\bibitem{DBLP:conf/sac/GrossS25}
Gross, D., Spieker, H.: {PCTL} model checking for temporal {RL} policy safety
  explanations. In: Proceedings of the 40th {ACM/SIGAPP} Symposium on Applied
  Computing ({SAC}). pp. 1514--1521. {ACM} (2025).
  \doi{10.1145/3672608.3707759}

\bibitem{DBLP:conf/ictai/GrossSG25}
Gross, D., Spieker, H., Gotlieb, A.: Bounded {PCTL} model checking of large
  language model outputs. In: 37th {IEEE} International Conference on Tools
  with Artificial Intelligence ({ICTAI}). pp. 106--113. {IEEE} (2025).
  \doi{10.1109/ICTAI66417.2025.00022}

\bibitem{DBLP:journals/fac/HanssonJ94}
Hansson, H., Jonsson, B.: A logic for reasoning about time and reliability.
  Formal Aspects Comput.  \textbf{6}(5),  512--535 (1994).
  \doi{10.1007/BF01211866}

\bibitem{holtzman2020curious}
Holtzman, A., Buys, J., Du, L., Forbes, M., Choi, Y.: The curious case of
  neural text degeneration. In: International Conference on Learning
  Representations ({ICLR}) (2020)

\bibitem{ji2023hallucination}
Ji, Z., Lee, N., Frieske, R., Yu, T., Su, D., Xu, Y., Ishii, E., Bang, Y.J.,
  Madotto, A., Fung, P.: Survey of hallucination in natural language
  generation. ACM Computing Surveys  \textbf{55}(12),  1--38 (2023).
  \doi{10.1145/3571730}

\bibitem{Katz17}
Katz, G., Barrett, C., Dill, D.L., Julian, K., Kochenderfer, M.J.: Reluplex: An
  efficient {SMT} solver for verifying deep neural networks. In: Proc.\ 29th
  Int.\ Conf.\ Computer Aided Verification (CAV). LNCS, vol. 10426 (2017)

\bibitem{Katz19}
Katz, G., Huang, D.A., Ibeling, D., Julian, K., Lazarus, C., Lim, R., Shah, P.,
  Thakoor, S., Wu, H., Zeljic, A., Dill, D.L., Kochenderfer, M.J., Barrett, C.:
  The marabou framework for verification and analysis of deep neural networks.
  In: Proc.\ 31st Int.\ Conf.\ Computer Aided Verification (CAV). LNCS, vol.
  11561 (2019)

\bibitem{krenn2020selfies}
Krenn, M., H{\"{a}}se, F., Nigam, A., Friederich, P., Aspuru{-}Guzik, A.:
  Self-referencing embedded strings {(SELFIES):} {A} 100{\%} robust molecular
  string representation. Machine Learning: Science and Technology
  \textbf{1}(4),  45024 (2020). \doi{10.1088/2632-2153/aba947}

\bibitem{Kwiatkowska11}
Kwiatkowska, M., Norman, G., Parker, D.: {PRISM} 4.0: Verification of
  probabilistic real-time systems. In: Proc.\ 23rd Int.\ Conf.\ Computer Aided
  Verification (CAV). LNCS, vol.~6806 (2011).
  \doi{10.1007/978-3-642-22110-1_47}

\bibitem{Legay10}
Legay, A., Delahaye, B., Bensalem, S.: Statistical model checking: An overview.
  In: Proc.\ Runtime Verification (RV). LNCS, vol.~6418, pp. 122--135. Springer
  (2010). \doi{10.1007/978-3-642-16612-9_11}

\bibitem{liang2023holistic}
Liang, P., Bommasani, R., Lee, T., {et al.}: Holistic evaluation of language
  models. Transactions on Machine Learning Research (TMLR)  (2023)

\bibitem{Ma2018}
Ma, L., Juefei-Xu, F., Zhang, F., Sun, J., Xue, M., Li, B., Chen, C., Su, T.,
  Li, L., Liu, Y., Zhao, J., Wang, Y.: {DeepGauge}: Multi-granularity testing
  criteria for deep learning systems. In: Proc.\ 33rd {IEEE/ACM} Int.\ Conf.\
  Automated Software Engineering ({ASE}). pp. 120--131. ACM (2018).
  \doi{10.1145/3238147.3238202}

\bibitem{MohriHashimoto2024}
Mohri, C., Hashimoto, T.: Language models with conformal factuality guarantees.
  In: Forty-first International Conference on Machine Learning ({ICML}) (2024)

\bibitem{Pei2017}
Pei, K., Cao, Y., Yang, J., Jana, S.: {DeepXplore}: Automated whitebox testing
  of deep learning systems. In: Proc.\ 26th {ACM} Symp.\ Operating Systems
  Principles ({SOSP}). pp. 1--18. ACM (2017). \doi{10.1145/3132747.3132785}

\bibitem{Quach2024}
Quach, V., Fisch, A., Schuster, T., Yala, A., Sohn, J.H., Jaakkola, T.S.,
  Barzilay, R.: Conformal language modeling. In: The Twelfth International
  Conference on Learning Representations ({ICLR}) (2024)

\bibitem{radford2019language}
Radford, A., Wu, J., Child, R., Luan, D., Amodei, D., Sutskever, I., et~al.:
  Language models are unsupervised multitask learners. OpenAI blog
  \textbf{1}(8), ~9 (2019)

\bibitem{Singh19}
Singh, G., Gehr, T., P\"{u}schel, M., Vechev, M.T.: An abstract domain for
  certifying neural networks. In: Proc.\ 46th {ACM} {SIGPLAN} Symp.\ Principles
  of Programming Languages ({POPL}). pp. 41:1--41:30. ACM (2019).
  \doi{10.1145/3290354}

\bibitem{STATHATOS2026103233}
Stathatos, E., Benardos, P., Vosniakos, G., Gross, D., Spieker, H., Gotlieb,
  A.: Large language models for high-level computer-aided process planning in a
  distributed manufacturing paradigm. Robotics and Computer-Integrated
  Manufacturing  \textbf{100},  103233 (2026). \doi{10.1016/j.rcim.2026.103233}

\bibitem{valmeekam2023planning}
Valmeekam, K., Marquez, M., Sreedharan, S., Kambhampati, S.: On the planning
  abilities of large language models - {A} critical investigation. In: Advances
  in Neural Information Processing Systems (NeurIPS). vol.~36 (2023)

\bibitem{Wang21}
Wang, S., Zhang, H., Xu, K., Lin, X., Jana, S., Hsieh, C., Kolter, J.Z.:
  Beta-crown: Efficient bound propagation with per-neuron split constraints for
  neural network robustness verification. In: Advances in Neural Information
  Processing Systems (NeurIPS) (2021)

\bibitem{DBLP:journals/jcisd/Weininger88}
Weininger, D.: Smiles, a chemical language and information system. 1.
  introduction to methodology and encoding rules. J. Chem. Inf. Comput. Sci.
  \textbf{28}(1) (1988). \doi{10.1021/CI00057A005}

\bibitem{Weiss2018}
Weiss, G., Goldberg, Y., Yahav, E.: Extracting automata from recurrent neural
  networks using queries and counterexamples. In: International Conference on
  Machine Learning ({ICML}). {PMLR} (2018)

\end{thebibliography}

\end{document}